\numberwithin{equation}{section}
\renewcommand{\@biblabel}[1]{#1\hfill \hspace{-0.2cm}}
\newtheorem{theorem}{Theorem}[section]
\newtheorem{lemma}[theorem]{Lemma}
\newtheorem{proposition}[theorem]{Proposition}
\newtheorem{corollary}[theorem]{Corollary}
\theoremstyle{definition}
\newtheorem{definition}[theorem]{Definition}
\newtheorem{example}[theorem]{Example}
\newtheorem{remark}[theorem]{Remark}
\newcommand{\F}{\mathbb{F}}
\newcommand{\Tr}{\mathrm{Tr}}
\newcommand{\card}{\operatorname{card}}
\newcommand{\Div}{\operatorname{Div}}
\newcommand{\charac}{\operatorname{char}}
\newcommand{\R}{\mathbb{R}}
\begin{document}

\title{Additive Conjucyclic Codes over $\F_{q^2}$: A Trace Correspondence and Quantum Error-Correction}

\author{%
  Jingjie Lv\affil{1}\corrauth,
  Xian Lian\affil{1},
    Ruihu Li\affil{2},
  and
  Hanxu Hou\affil{1,3}
}

% \shortauthors is used in copyright information in the end of the paper
\shortauthors{the Author(s)}

\address{%
  \addr{\affilnum{1}}{School of Electrical Engineering \& Intelligentization, Dongguan University of Technology, Dongguan 523808, China}
  \addr{\affilnum{2}}{Fundamentals Department, Air Force   Engineering University, Xi'an 710051, China}
\addr{\affilnum{3}}{Institute of Network Coding, The Chinese University of Hong Kong, Hong Kong, 999077, China}}

% corresponding author
\corraddr{Email: juxianljj@163.com
}

\begin{abstract}
Additive conjucyclic codes over $\F_{q^2}$ are closed under the conjugated cyclic shift and play an important role in constructing quantum error-correcting codes (QECCs). However, a systematic algebraic theory for such codes over general finite fields has been lacking. In this paper, we develop a unified framework by establishing a trace-based $\F_q$-linear isomorphism between $\F_{q^2}^n$ and $\F_q^{2n}$. This correspondence shows that additive conjucyclic codes of length $n$ correspond bijectively to $q$-ary linear cyclic codes of length $2n$, translating their structural analysis to the well-understood setting of cyclic codes. Using this isomorphism, we determine the enumeration of such codes and give explicit forms of their generator matrices.
We then introduce an alternating inner product on $\F_{q^2}^n$, which is shown to be compatible with the symplectic inner product on $\F_q^{2n}$ under the trace isomorphism. Based on this inner product, we characterize the dual-containing condition for additive conjucyclic codes and derive explicit parity-check matrices. Finally, we construct $q$-ary QECCs from dual-containing additive conjucyclic codes.
Our results unify and generalize previous studies on quaternary additive conjucyclic codes and present a construction method for $q$-ary QECCs from additive conjucyclic codes, together with an illustrative example.
\end{abstract}

\keywords{
{Additive codes; conjucyclic codes; trace map; alternating inner product; quantum error-correcting codes}
\newline
\textbf{Mathematics Subject Classification:} 94B05, 11T71}

\maketitle

\section{Introduction}
Additive codes were first introduced by Delsarte and Levenshtein in 1973~\cite{delsarte1973}. A seminal contribution by Calderbank \textit{et al.}~\cite{calderbank1998} established a direct connection between quaternary additive codes and binary quantum error-correcting codes (QECCs). The latter play a crucial role in mitigating environmental and operational decoherence in quantum information, thereby underpinning both quantum communication and quantum computation. Since then, additive codes have attracted sustained research interest, with recent advances spanning their structural properties, optimal parameters, and applications in quantum error correction~\cite{abdukhalikov2026, bhowmick2026, benbelkacem2020, ezerman2011, huffman2013, kim2003, kim2017, kurz2026, mahmoudi2019}.

Additive cyclic codes constitute a distinguished subclass of additive codes, characterized by invariance under cyclic shifts (see Definition~\ref{def:cyclic}). This structural property facilitates efficient encoding and decoding via shift registers. The algebraic structure of quaternary additive cyclic codes was first investigated by Huffman~\cite{huffman2007}, who established a canonical decomposition for codes of odd length. Subsequently, Bierbrauer~\cite{bierbrauer2012} extended the theory to additive cyclic codes over arbitrary finite fields. G\"uneri \textit{et al.}~\cite{guneri2017} derived a sufficient condition under which a particular family of additive cyclic codes are complementary dual. The existence of asymptotically good additive cyclic codes was later demonstrated by Shi \textit{et al.}~\cite{shi2018}. More recently, Dastbasteh and Shivji~\cite{dastbasteh2023} introduced a polynomial representation for additive cyclic codes over $\F_{p^2}$ and derived new binary quantum codes from this framework. Very recently, further developments include the characterization of trace duality and additive complementary pairs for additive cyclic codes over finite chain rings~\cite{bhowmick2026}.

Similar to additive cyclic codes, additive conjucyclic codes are closed under the conjugated cyclic shift (see Definition~\ref{def:conjucyclic}) and can also be utilized to construct QECCs~\cite{calderbank1998}. However, in contrast to the extensive study of additive cyclic codes, research on additive conjucyclic codes remains remarkably limited. A major obstacle is that these codes do not admit a canonical polynomial representation, which complicates their structural analysis. Early studies on conjucyclic codes were confined to the linear setting, primarily over $\F_4$ \cite{abualrub2007} or related rings~\cite{sirap2008}. The additive case was first systematically addressed by Abualrub \textit{et al.}~\cite{abualrub2020, abualrub2022}, who developed an algebraic description over $\F_4$ using a linear algebraic approach. Nevertheless, their method relies heavily on the specific properties of $\F_4$ and does not readily generalize to arbitrary $q^2$-ary alphabets. To date, the literature on additive conjucyclic codes remains sparse, with no unified framework applicable to general finite fields $\F_{q^2}$. This gap motivates the present work, in which we aim to develop a systematic algebraic theory for additive conjucyclic codes over $\F_{q^2}$ by establishing a trace-based isomorphism with $q$-ary linear cyclic codes. This approach not only reveals the underlying cyclic structure but also facilitates the construction of QECCs via the alternating inner product.

In this paper, we investigate the algebraic structure of additive conjucyclic codes over $\F_{q^2}$. By employing the trace function from $\F_{q^2}$ down to $\F_q$, we establish an isomorphic correspondence between $q^2$-ary additive conjucyclic codes and $q$-ary linear cyclic codes. Using this correspondence, we determine the enumeration of such codes and provide explicit forms of their generator matrices. Furthermore, by defining an alternating inner product on $\F_{q^2}^n$, we characterize the dual-containing condition for these codes and derive explicit parity-check matrices, which then yield a construction method for $q$-ary quantum error-correcting codes from additive conjucyclic codes.

The remainder of this paper is organized as follows. Section~\ref{sec:prelim} introduces the necessary definitions and preliminary results on cyclic codes, conjucyclic codes, as well as the Euclidean and symplectic inner products. A list of the main symbols used throughout the paper is provided at the end of this section. In Section~\ref{sec:structure}, we establish a trace-based correspondence between $q^2$-ary additive conjucyclic codes and $q$-ary linear cyclic codes, and give their enumeration and generator matrices. Section~\ref{sec:generator} defines the alternating inner product on $\F_{q^2}^n$, investigates the dual structure under this inner product, derives explicit parity-check matrices, and presents a construction method for $q$-ary QECCs from additive conjucyclic codes, together with an illustrative example. Section~\ref{sec:conclusion} concludes the paper.

\section{Preliminaries}\label{sec:prelim}
Let $\F_{q^{2}}$ be a finite field with $q^{2}$ elements, where $q$ is a prime power $p$. Denote by $\charac(\F_{q^{2}}) = p$ the characteristic of $\F_{q^{2}}$. For any element $\alpha \in \F_{q^{2}}$, let $\bar{\alpha} = \alpha^{q}$ denote its conjugation. A linear code with parameters $[n,k]_{q^{2}}$ is a $k$-dimensional linear subspace of $\F_{q^{2}}^{n}$. An additive code $\mathcal{C}$ of length $n$ over $\F_{q^{2}}$ is a subgroup of the additive group of $\F_{q^{2}}^{n}$; if $|\mathcal{C}| = M$, we refer to it as an $(n,M)_{q^{2}}$ additive code. In general, for an arbitrary element $k \in \F_{q^{2}}$ and a codeword $\mathbf{c} \in \mathcal{C}$, the scalar multiple $k\mathbf{c}$ need not lie in $\mathcal{C}$. When $\mathcal{C}$ is closed under multiplication by elements of $\F_{q}$ (i.e., $k\mathbf{c} \in \mathcal{C}$ for all $k \in \F_{q}$ and $\mathbf{c} \in \mathcal{C}$), we call $\mathcal{C}$ an $\F_{q}$-linear additive code. In this paper, the term additive code will always refer to an $\F_{q}$-linear additive code unless stated otherwise.

For a vector $\pmb{u} = (u_0, u_1, \ldots, u_{n-1}) \in \F_q^n$, define its Hamming weight as $w_h(\pmb{u}) = \card\{i \mid u_i \neq 0,\ 0 \le i \le n-1\}$. For a $q$-ary linear or additive code $\mathcal{C}$ of length $n$, the minimum Hamming weight is given by
\[
w_h(\mathcal{C}) = \min \{ w_h(\pmb{u}) \mid \pmb{u} \in \mathcal{C},\ \pmb{u} \neq \mathbf{0} \}.
\]
If $n = 2m$, for $0 \le j \le m-1$, define the symplectic weight of $\pmb{u}$ as $w_s(\pmb{u}) = \card\{j \mid (u_j, u_{m+j}) \neq (0,0)\}$. The minimum symplectic weight of $\mathcal{C}$ is then defined as
\[
w_s(\mathcal{C}) = \min \{ w_s(\pmb{u}) \mid \pmb{u} \in \mathcal{C},\ \pmb{u} \neq \mathbf{0} \}.
\]

We now introduce the definition of cyclic codes.

\begin{definition}\label{def:cyclic}
	Let $\mathcal{C}$ be a linear or additive code of length $n$ over $\F_q$. If for every codeword $\mathbf{c} = (c_0, c_1, \ldots, c_{n-1}) \in \mathcal{C}$, its right cyclic shift
	\begin{equation}\label{eq111xx}
		\sigma(\mathbf{c}) = (c_{n-1}, c_0, \ldots, c_{n-2})
	\end{equation}
 also belongs to $\mathcal{C}$, then $\mathcal{C}$ is called a $q$-ary cyclic code of length $n$.
\end{definition}

Let $\R = \F_q[x] / \langle x^n - 1 \rangle$ be the quotient ring. Define an $\F_q$-module isomorphism $\phi: \F_q^n \to \R$ by
\[
\phi(c_0, c_1, \ldots, c_{n-1}) = c_0 + c_1 x + \cdots + c_{n-1} x^{n-1}.
\]
It is easy to see that $\mathcal{C}$ is a $q$-ary linear cyclic code of length $n$ if and only if $\phi(\mathcal{C})$ is an ideal of $\R$. Since $\R$ is a principal ideal ring, there exists a one-to-one correspondence between monic divisors of $x^n - 1$ in $\F_q[x]$ and $q$-ary linear cyclic codes of length $n$. Thus, the factorization of $x^n - 1$ is essential for classifying such codes. If a linear cyclic code $\mathcal{C}$ is generated by a monic divisor $g(x)$ of $x^n - 1$, we call $g(x)$ the generator polynomial of $\mathcal{C}$; for convenience, we denote $\mathcal{C}$ by $\langle g(x) \rangle$ throughout.

The definition of conjucyclic codes is analogous.

\begin{definition}\label{def:conjucyclic}
	Let $\mathcal{C}$ be a $q^2$-ary linear or additive code of length $n$. For any codeword $\pmb{c} = (c_0, c_1, \ldots, c_{n-1}) \in \mathcal{C}$, if its right conjucyclic shift 
	\begin{equation}\label{eq222xx}
		T(\pmb{c}) = (\bar{c}_{n-1}, c_0, \ldots, c_{n-2})
	\end{equation}
	lies in $\mathcal{C}$, then $\mathcal{C}$ is called a $q^2$-ary conjucyclic code of length $n$.
\end{definition}

We now introduce several inner products and their associated dual codes, which will be used throughout this paper.

For two vectors $\pmb{u} = (u_0, u_1, \ldots, u_{n-1})$ and $\pmb{v} = (v_0, v_1, \ldots, v_{n-1})$ in $\F_q^n$, the Euclidean inner product is defined as
\[
\langle \pmb{u}, \pmb{v} \rangle_e = \sum_{i=0}^{n-1} u_i v_i.
\]
If $n = 2m$, the symplectic inner product is given by
\begin{equation}\label{wewe12}
\langle \pmb{u}, \pmb{v} \rangle_s = \sum_{i=0}^{m-1} (u_i v_{m+i} - u_{m+i} v_i).
\end{equation}
The corresponding Euclidean and symplectic dual codes of a code $\mathcal{C}$ are respectively defined as
\begin{equation*}
\begin{split}
\mathcal{C}^{\perp_e} = \{ \pmb{v} \in \F_q^n \mid \langle \pmb{u}, \pmb{v} \rangle_e = 0,\ \forall \pmb{u} \in \mathcal{C} \},~~
\mathcal{C}^{\perp_s} = \{ \pmb{v} \in \F_q^{2m} \mid \langle \pmb{u}, \pmb{v} \rangle_s = 0,\ \forall \pmb{u} \in \mathcal{C} \}.
\end{split}
\end{equation*}
	
	\section*{List of symbols}
For the reader's convenience, we list the main symbols used in this paper:
	\begin{center}
		\begin{tabular}{l l}
			$\mathbb{F}_{q^2}$ & finite field with $q^2$ elements, $q$ a prime power $p$ \\
			$\bar{\alpha} = \alpha^q$ & conjugation in $\mathbb{F}_{q^2}$ \\
			$\Tr: \mathbb{F}_{q^2} \to \mathbb{F}_q$ & trace map, $\Tr(x)=x+x^q$ \\
			$\beta$ & a fixed primitive element of $\mathbb{F}_{q^2}$ \\
%			$w_h(\mathbf{u})$ & Hamming weight of vector $\mathbf{u}$ \\
%			$w_s(\mathbf{u})$ & symplectic weight of $\mathbf{u}$ (when length is even) \\
			$\mathcal{C}^{\perp_e},\mathcal{C}^{\perp_s},\mathcal{C}^{\perp_a}$ & Euclidean, symplectic and alternating dual codes of $\mathcal{C}$ \\
		$\sigma$ & right cyclic shift operator, $\sigma(c_0,\dots,c_{n-1}) = (c_{n-1},c_0,\dots,c_{n-2})$ \\
			$T$ & right conjucyclic shift operator, $T(c_0,\dots,c_{n-1})=(\bar{c}_{n-1},c_0,\dots,c_{n-2})$ \\
			$\varphi_\beta: \mathbb{F}_{q^2}\to\mathbb{F}_q^2$ & $\varphi_\beta(\alpha)=(\Tr(\beta\alpha),\Tr(\bar{\beta}\alpha))$ \\
			$\Psi_\beta: \mathbb{F}_{q^2}^n\to\mathbb{F}_q^{2n}$ & componentwise extension of $\varphi_\beta$ \\
			$\langle\cdot,\cdot\rangle_e$ & Euclidean inner product \\
			$\langle\cdot,\cdot\rangle_s$ & symplectic inner product on $\mathbb{F}_q^{2m}$ \\
			$\langle\cdot,\cdot\rangle_{a,\beta}$ & alternating inner product on $\mathbb{F}_{q^2}^n$ \\
			$\Div_{\F_q}(x^{2n} - 1)$& set of divisors of $x^{2n}-1$ over $\F_q$\\
			$g(x)$ & generator polynomial of a cyclic code (divisor of $x^{2n}-1$) \\
			$h(x)$ & $(x^{2n}-1)/g(x)$\\
			$h^*(x)$ & reciprocal polynomial of $h(x)$ \\
			$\pmb{f}$ & coefficient vector of polynomial $f(x)$ \\
			$\tau$ & linear transformation $\tau(v_0,\dots,v_{2n-1})=(-v_n,\dots,-v_{2n-1},v_0,\dots,v_{n-1})$ \\
			$\hat{G}$ & generator matrix of an additive conjucyclic code $\mathcal{C}$ \\
			$\hat{H}_a$ & generator matrix of $\mathcal{C}^{\perp_{a,\beta}}$ (also parity-check matrix of $\mathcal{C}$) \\
		\end{tabular}
	\end{center}

\section{Algebraic correspondence and generator description}\label{sec:structure}

In this section, we first show that there exists a bijective correspondence between $q^2$-ary additive conjucyclic codes of length $n$ and $q$-ary linear cyclic codes of length $2n$.

Let $\Tr: \F_{q^2} \to \F_q$ denote the trace map defined by $\Tr(x) = x + x^q$ for all $x \in \F_{q^2}$. For a fixed primitive element $\beta \in \F_{q^2}$, define a mapping $\varphi_\beta: \F_{q^2} \to \F_q^2$ by
\begin{equation}\label{eq333xx}
\varphi_\beta(\alpha) = \bigl( \Tr(\beta\alpha),\; \Tr(\bar{\beta}\alpha) \bigr).
\end{equation}

\begin{proposition}
	The map $\varphi_\beta$ in \eqref{eq333xx} is an $\F_q$-linear isomorphism.
\end{proposition}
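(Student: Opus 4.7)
The plan is to verify $\mathbb{F}_q$-linearity directly from the definition, then establish bijectivity via a dimension argument combined with a short injectivity check that rests on the non-degeneracy of the trace form.

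First I would check $\mathbb{F}_q$-linearity. For $\alpha_1,\alpha_2\in\mathbb{F}_{q^2}$ and $c\in\mathbb{F}_q$, both coordinates of $\phi$ are compositions of left multiplication by a fixed element of $\mathbb{F}_{q^2}$ (either $\beta$ or $\bar\beta$) with the trace map. Since multiplication by a fixed element is $\mathbb{F}_q$-linear and the trace $\mathrm{Tr}:\mathbb{F}_{q^2}\to\mathbb{F}_q$ is itself $\mathbb{F}_q$-linear, the identities $\phi(\alpha_1+\alpha_2)=\phi(\alpha_1)+\phi(\alpha_2)$ and $\phi(c\alpha_1)=c\,\phi(\alpha_1)$ follow immediately.

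Next, because $\mathbb{F}_{q^2}$ and $\mathbb{F}_q^{2}$ both have dimension $2$ as $\mathbb{F}_q$-vector spaces, it suffices to prove $\phi$ is injective. Suppose $\phi(\alpha)=(0,0)$, so $\mathrm{Tr}(\beta\alpha)=0$ and $\mathrm{Tr}(\bar\beta\alpha)=0$. The key observation is that $\{\beta,\bar\beta\}$ is an $\mathbb{F}_q$-basis of $\mathbb{F}_{q^2}$: otherwise $\bar\beta=c\beta$ for some $c\in\mathbb{F}_q$, which would force $\beta^{q-1}\in\mathbb{F}_q$ and contradict the fact that $\beta$ is a primitive element of $\mathbb{F}_{q^2}$ (its order $q^2-1$ does not divide $2(q-1)$ unless $q=1$). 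Hence every $x\in\mathbb{F}_{q^2}$ can be written as $x=a\beta+b\bar\beta$ with $a,b\in\mathbb{F}_q$, and by $\mathbb{F}_q$-linearity of the trace we get
\[
\mathrm{Tr}(x\alpha)=a\,\mathrm{Tr}(\beta\alpha)+b\,\mathrm{Tr}(\bar\beta\alpha)=0
\]
for every $x\in\mathbb{F}_{q^2}$. Since the trace bilinear form $(x,y)\mapsto\mathrm{Tr}(xy)$ on $\mathbb{F}_{q^2}$ is non-degenerate, this forces $\alpha=0$, proving injectivity and therefore bijectivity.

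The part I expect to require the most care is the linear independence of $\{\beta,\bar\beta\}$ over $\mathbb{F}_q$; everything else is a mechanical verification. An alternative route, should one wish to avoid the primitivity argument, would be to compute $\phi$ on the basis $\{1,\beta\}$ of $\mathbb{F}_{q^2}/\mathbb{F}_q$ and show directly that the resulting $2\times 2$ matrix over $\mathbb{F}_q$ has nonzero determinant; but invoking non-degeneracy of the trace form is cleaner and avoids case splits on the characteristic.
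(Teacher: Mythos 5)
Your proof is correct, but the injectivity argument takes a genuinely different route from the paper's. The paper writes the kernel condition as the explicit system $\beta\gamma+\bar\beta\gamma^q=0$, $\bar\beta\gamma+\beta\gamma^q=0$ (with $\gamma=\alpha_1-\alpha_2$) and solves it by a case split on the characteristic: in characteristic $2$ it deduces $\beta\gamma,\bar\beta\gamma\in\mathbb{F}_q$, while in odd characteristic it adds and subtracts the equations and uses $\beta\pm\bar\beta\neq 0$ (justified by the order of $\beta$). You instead observe that $\{\beta,\bar\beta\}$ is an $\mathbb{F}_q$-basis of $\mathbb{F}_{q^2}$, so the two conditions $\mathrm{Tr}(\beta\alpha)=\mathrm{Tr}(\bar\beta\alpha)=0$ say that $\alpha$ is orthogonal to all of $\mathbb{F}_{q^2}$ under the trace form, and non-degeneracy of that form (separability of $\mathbb{F}_{q^2}/\mathbb{F}_q$) finishes the proof with no case split. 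This is cleaner and more conceptual; what it costs is the appeal to non-degeneracy of the trace form, which the paper avoids by staying entirely elementary. Note that the order computation you need for linear independence is essentially the same one the paper performs: your parenthetical ``$q^2-1$ does not divide $2(q-1)$'' is the right fact, but it deserves one more line --- from $\bar\beta=c\beta$ with $c\in\mathbb{F}_q^{*}$ you get $c=\beta^{q-1}$, and since $c^{q-1}=1$ and $c^{q+1}=\beta^{q^2-1}=1$, taking $\gcd(q-1,q+1)\le 2$ gives $c^2=1$, i.e.\ $\beta^{2(q-1)}=1$, which contradicts $\mathrm{ord}(\beta)=q^2-1>2(q-1)$. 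The surjectivity step (your dimension count versus the paper's cardinality count) is the same argument in different clothing.
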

\begin{proof}
	For any $k_1, k_2 \in \F_q$ and $\alpha_1, \alpha_2 \in \F_{q^2}$, we have
	\[
	\begin{aligned}
		\varphi_\beta(k_1\alpha_1 + k_2\alpha_2)
		&= \bigl( \Tr(\beta(k_1\alpha_1 + k_2\alpha_2)),\; \Tr(\bar{\beta}(k_1\alpha_1 + k_2\alpha_2)) \bigr) \\
		&= \bigl( k_1\Tr(\beta\alpha_1) + k_2\Tr(\beta\alpha_2),\; k_1\Tr(\bar{\beta}\alpha_1) + k_2\Tr(\bar{\beta}\alpha_2) \bigr) \\
		&= k_1\varphi_\beta(\alpha_1) + k_2\varphi_\beta(\alpha_2),
	\end{aligned}
	\]
	so $\varphi_\beta$ is $\F_q$-linear.
	
	To prove injectivity, assume $\varphi_\beta(\alpha_1) = \varphi_\beta(\alpha_2)$. Then
	\[
	\Tr(\beta(\alpha_1 - \alpha_2)) = 0, \quad\Tr(\bar{\beta}(\alpha_1-\alpha_2)) = 0,
	\]
	which is equivalent to the system
	\begin{equation}\label{eq1}
		\begin{cases}
		\beta(\alpha_1 - \alpha_2) + \bar{\beta}(\alpha_1 - \alpha_2)^q = 0\\
		\bar{\beta}(\alpha_1 - \alpha_2) + \beta(\alpha_1 - \alpha_2)^q = 0
	\end{cases}.
	\end{equation}
	If $\charac(\F_{q^2}) = 2$, then \eqref{eq1} gives
	\[
	\beta(\alpha_1 + \alpha_2) = \bar{\beta}(\alpha_1 + \alpha_2)^q,\quad 
	\bar{\beta}(\alpha_1 + \alpha_2) = \beta(\alpha_1 + \alpha_2)^q,
	\]
	so both $\beta(\alpha_1 + \alpha_2)$ and $\bar{\beta}(\alpha_1 + \alpha_2)$ lie in $\F_q$. Since $\beta$ is primitive, this forces $\alpha_1 = \alpha_2$.
	
	Now suppose $\charac(\F_{q^2}) \neq 2$. Adding and subtracting the two equations in \eqref{eq1} yields
	\[
	\begin{cases}
		(\beta + \bar{\beta})\bigl((\alpha_1 - \alpha_2) + (\alpha_1 - \alpha_2)^q\bigr) = 0\\
		(\beta - \bar{\beta})\bigl((\alpha_1 - \alpha_2) - (\alpha_1 - \alpha_2)^q\bigr) = 0
	\end{cases}.
	\]
	Note that $\beta + \bar{\beta} \neq 0$ and $\beta - \bar{\beta} \neq 0$; otherwise $\beta^{2q-2}=1$ or $\beta^{q-1}=1$, contradicting the fact that $\beta$ has order $q^2-1$ with $q-1 < 2q-2 < q^2-1$. Hence
	\[
	\begin{cases}
		(\alpha_1 - \alpha_2) + (\alpha_1 - \alpha_2)^q = 0\\
		(\alpha_1 - \alpha_2) - (\alpha_1 - \alpha_2)^q = 0
	\end{cases},
	\]
	which implies $2(\alpha_1 - \alpha_2) = 0$ and therefore $\alpha_1 = \alpha_2$ (since $p \neq 2$). Thus $\varphi_\beta$ is injective, and because both spaces have dimension $2$ over $\F_q$, it is an isomorphism.
\end{proof}

\begin{remark}\label{rem:inverse}
	Notice that the mapping $\varphi_\beta$ is an $\F_q$-linear isomorphism. Hence any element $\alpha \in \F_{q^2}$ is uniquely determined by $(\Tr(\beta\alpha), \Tr(\bar{\beta}\alpha)) \in \F_q^2$. In fact,
	\begin{equation}\label{eq11111}
		\alpha = \frac{1}{\beta - \beta^{2q-1}} \Tr(\beta\alpha) - \frac{\beta^{q-1}}{\beta - \beta^{2q-1}} \Tr(\bar{\beta}\alpha).
	\end{equation}
\end{remark}

\begin{example}\label{ex:q3}
	Let $q = 3$ and let $\beta$ be a primitive element of $\F_9$. Then the mapping $\varphi_\beta: \F_9 \to \F_3^2$ is given as follows:
	\[
	\renewcommand{\arraystretch}{0.8}
	\begin{array}{c|c}
		\alpha & \varphi_\beta(\alpha) \\ \hline
		0 & (0,0) \\
		\beta^0 & (1,1) \\
		\beta & (0,1) \\
		\beta^2 & (1,2) \\
		\beta^3 & (1,0)
	\end{array}
\hspace{0.7cm}\vrule\hspace{0.7cm}
	\begin{array}{c|c}
		\alpha & \varphi_\beta(\alpha) \\ \hline
		\beta^4 & (2,2) \\
		\beta^5 & (0,2) \\
		\beta^6 & (2,1) \\
		\beta^7 & (2,0)
	\end{array}~.
	\]
 According to \eqref{eq11111} in Remark~\ref{rem:inverse}, we compute
	\[
	\frac{1}{\beta - \beta^5} = \beta^3,\quad \frac{\beta^{2}}{\beta - \beta^{5}} = \beta^5
	\]
	and the identities can be summarized as follows, where \((a,b) = \varphi_\beta(\alpha) = \bigl(\Tr(\beta\alpha), \Tr(\bar{\beta}\alpha)\bigr)\):
	\[
	\renewcommand{\arraystretch}{0.8}
	\begin{array}{c|c}
		\beta^3 \cdot a - \beta^5 \cdot b & \alpha \\ \hline
		\beta^3 \cdot 0 - \beta^5 \cdot 0 & 0 \\
		\beta^3 \cdot 1 - \beta^5 \cdot 1 & \beta^0 \\
		\beta^3 \cdot 0 - \beta^5 \cdot 1 & \beta \\
		\beta^3 \cdot 1 - \beta^5 \cdot 2 & \beta^2 \\
		\beta^3 \cdot 1 - \beta^5 \cdot 0 & \beta^3
	\end{array}
	\hspace{0.7cm}\vrule\hspace{0.7cm}
	\begin{array}{c|c}
		\beta^3 \cdot a - \beta^5 \cdot b & \alpha \\ \hline
		\beta^3 \cdot 2 - \beta^5 \cdot 2 & \beta^4 \\
		\beta^3 \cdot 0 - \beta^5 \cdot 2 & \beta^5 \\
		\beta^3 \cdot 2 - \beta^5 \cdot 1 & \beta^6 \\
		\beta^3 \cdot 2 - \beta^5 \cdot 0 & \beta^7
	\end{array}~.
	\]
\end{example}

The map $\varphi_\beta$ extends componentwise to an $\F_q$-linear isomorphism $\Psi_\beta: \F_{q^2}^n \to \F_q^{2n}$ as follows: for any vector $(\alpha_0, \alpha_1, \ldots, \alpha_{n-1}) \in \F_{q^2}^n$,
\begin{equation*}
	\Psi_\beta(\alpha_0, \alpha_1, \ldots, \alpha_{n-1}) = \bigl( \varphi_\beta(\alpha_0), \varphi_\beta(\alpha_1), \ldots, \varphi_\beta(\alpha_{n-1}) \bigr),
\end{equation*}
where each $\varphi_\beta(\alpha_i) = (\Tr(\beta\alpha_i), \Tr(\bar{\beta}\alpha_i))$ is regarded as a pair in $\F_q^2$. Expanding the pairs in the order of all first components followed by all second components, we obtain the explicit coordinate representation
\begin{equation}\label{eq2-exp}
	\Psi_\beta(\alpha_0,\alpha_1,\ldots,\alpha_{n-1}) = \bigl(\Tr(\beta\alpha_0),\Tr(\beta\alpha_1),\ldots,\Tr(\beta\alpha_{n-1}),\Tr(\bar{\beta}\alpha_0),\ldots,\Tr(\bar{\beta}\alpha_{n-1})\bigr).
\end{equation}

\begin{proposition}\label{prop:commute}
	For any vector $\alpha \in \F_{q^2}^n$ and any integer $i \ge 0$, we have $\Psi_\beta(T^i(\alpha)) = \sigma^i(\Psi_\beta(\alpha))$, where $\sigma$ and $T$ are the cyclic shift and conjucyclic shift operators defined in \eqref{eq111xx} and \eqref{eq222xx}, respectively. In other words, the following diagram commutes for all $i$:
	\[
	\begin{array}{ccc}
		\F_{q^2}^n & \xrightarrow{T^i} & \F_{q^2}^n\\
		\downarrow\Psi_\beta && \downarrow\Psi_\beta\\
		\F_q^{2n} & \xrightarrow{\sigma^i} & \F_q^{2n}.
	\end{array}
	\]
\end{proposition}
\begin{proof}
	We first prove the case $i=1$. For any $\alpha = (\alpha_0, \ldots, \alpha_{n-1}) \in \F_{q^2}^n$, we compute
	\[
	\begin{aligned}
		\Psi_\beta(T(\alpha)) &= \Psi_\beta(\bar{\alpha}_{n-1}, \alpha_0, \ldots, \alpha_{n-2}) \\
		&= \bigl( \Tr(\beta\bar{\alpha}_{n-1}), \Tr(\beta\alpha_0), \ldots, \Tr(\beta\alpha_{n-2}), 
	\Tr(\bar{\beta}\bar{\alpha}_{n-1}), \Tr(\bar{\beta}\alpha_0), \ldots, \Tr(\bar{\beta}\alpha_{n-2}) \bigr),
	\end{aligned}
	\]
	\[
	\begin{aligned}
		\sigma(\Psi_\beta(\alpha)) &= \sigma\bigl( \Tr(\beta\alpha_0), \ldots, \Tr(\beta\alpha_{n-1}), \Tr(\bar{\beta}\alpha_0), \ldots, \Tr(\bar{\beta}\alpha_{n-1}) \bigr) \\
		&= \bigl( \Tr(\bar{\beta}\alpha_{n-1}), \Tr(\beta\alpha_0), \ldots, \Tr(\beta\alpha_{n-2}),  \Tr(\beta\alpha_{n-1}), \Tr(\bar{\beta}\alpha_0), \ldots, \Tr(\bar{\beta}\alpha_{n-2}) \bigr).
	\end{aligned}
	\]
	Using the identities $\Tr(\beta\bar{\alpha}_{n-1}) = \Tr(\bar{\beta}\alpha_{n-1})$ and $\Tr(\bar{\beta}\bar{\alpha}_{n-1}) = \Tr(\beta\alpha_{n-1})$ (which follow directly from the definition of the trace), we see that the first component of $\Psi_\beta(T(\alpha))$ coincides with the first component of $\sigma(\Psi_\beta(\alpha))$, and the $(n+1)$-th component of $\Psi_\beta(T(\alpha))$ coincides with the $n$-th component of $\sigma(\Psi_\beta(\alpha))$. All remaining components are identical by construction. Hence $\Psi_\beta(T(\alpha)) = \sigma(\Psi_\beta(\alpha))$, which establishes the case $i=1$.
	
	Now assume that $\Psi_\beta(T^{i-1}(\alpha)) = \sigma^{i-1}(\Psi_\beta(\alpha))$ holds for some $i \ge 2$. Then
	\[
	\Psi_\beta(T^i(\alpha)) = \Psi_\beta(T(T^{i-1}(\alpha))) = \sigma(\Psi_\beta(T^{i-1}(\alpha))) = \sigma(\sigma^{i-1}(\Psi_\beta(\alpha))) = \sigma^i(\Psi_\beta(\alpha)).
	\]
	Thus the statement holds for all $i \ge 0$ by induction.
\end{proof}

Consequently, we obtain a bijective correspondence between additive conjucyclic codes and linear cyclic codes.

\begin{theorem}\label{thm:correspondence}
	A code $\mathcal{C} \subseteq \F_{q^2}^n$ is an additive conjucyclic code if and only if $\mathcal{D} = \Psi_\beta(\mathcal{C})$ is a $q$-ary linear cyclic code of length $2n$. Moreover, the minimum Hamming weight of $\mathcal{C}$ equals the minimum symplectic weight of $\mathcal{D}$, i.e., $w_h(\mathcal{C}) = w_s(\mathcal{D})$.
\end{theorem}
\begin{proof}
	Suppose $\mathcal{C}$ is an additive conjucyclic code of length $n$ over $\F_{q^2}$. Then $\mathcal{D} = \Psi_\beta(\mathcal{C})$ is an $\F_q$-linear subspace of $\F_q^{2n}$, i.e., a $q$-ary linear code of length $2n$. By Proposition~\ref{prop:commute}, we have $\sigma(\mathcal{D}) = \sigma(\Psi_\beta(\mathcal{C})) = \Psi_\beta(T(\mathcal{C})) = \Psi_\beta(\mathcal{C}) = \mathcal{D}$, so $\mathcal{D}$ is cyclic.
	
	Conversely, assume $\mathcal{D}$ is a $q$-ary linear cyclic code of length $2n$. Since $\Psi_\beta$ is an $\F_q$-linear isomorphism, $\mathcal{C} = \Psi_\beta^{-1}(\mathcal{D})$ is an additive subgroup of $\F_{q^2}^n$. Moreover, $\Psi_\beta(T(\mathcal{C})) = \sigma(\Psi_\beta(\mathcal{C})) = \sigma(\mathcal{D}) = \mathcal{D} = \Psi_\beta(\mathcal{C})$, which implies $T(\mathcal{C}) = \mathcal{C}$. Hence $\mathcal{C}$ is an additive conjucyclic code.
	
	For any codeword $\mathbf{c} = (c_0, c_1, \ldots, c_{n-1}) \in \mathcal{C}$, it follows from the definition of $\Psi_\beta$ that $c_i = 0$ if and only if $(\Tr(\beta c_i), \Tr(\bar{\beta} c_i)) = (0,0)$. Consequently, $w_h(\mathcal{C}) = w_s(\mathcal{D})$.
\end{proof}

\begin{remark}
	By the theory of cyclic codes \cite{huffman2003}, the binary cyclic codes derived from additive conjucyclic codes over $\F_4$ in \cite{abualrub2020} are necessarily repeated-root, since $\gcd(2n,2)=2$. It is known that repeated-root cyclic codes are asymptotically inferior to simple-root cyclic codes \cite{castagnoli1991}. In contrast, the construction presented in this work yields $q$-ary cyclic codes that can be either repeated-root or simple-root, depending on the factorization of $x^{2n}-1$ over $\F_q$. This flexibility allows us to avoid the asymptotic limitations of repeated-root codes and opens the possibility of constructing quantum codes with better asymptotic parameters.
\end{remark}

Let $2n = p^{\ell}n_0$, where $p = \charac(\F_q)$ and $\ell \ge 0$, $n_0 > 0$ are integers such that $\gcd(n_0, p) = 1$. Assume that $x^{n_0} - 1$ factorizes into distinct irreducible polynomials in $\F_q[x]$ as
\[
x^{n_0} - 1 = g_1(x) g_2(x) \cdots g_t(x),
\]
where each $g_i(x)$ is irreducible and $g_i(x) \neq g_j(x)$ for $i \neq j$. Then
\begin{equation}\label{eq:factor}
	x^{2n} - 1 = (x^{n_0} - 1)^{p^{\ell}} = g_1(x)^{p^{\ell}} g_2(x)^{p^{\ell}} \cdots g_t(x)^{p^{\ell}}.
\end{equation}
The set of divisors of $x^{2n} - 1$ over $\F_q$ is therefore
\[
\Div_{\F_q}(x^{2n} - 1) = \left\{ g_1(x)^{s_1} g_2(x)^{s_2} \cdots g_t(x)^{s_t} \mid 0 \le s_1, s_2, \ldots, s_t \le p^{\ell} \right\}.
\]
Consequently, $x^{2n} - 1$ has $(p^{\ell} + 1)^t$ distinct monic divisors in $\F_q[x]$. By the isomorphism $\Psi_\beta$ defined in \eqref{eq2-exp}, this immediately yields the number of $q^2$-ary additive conjucyclic codes of length $n$.

\begin{theorem}\label{theo3.7}
	Assume that the polynomial $x^{2n}-1$ is factorized as Eq. \eqref{eq:factor}, then there exist $(p^{\ell}+1)^t$ distinct $q^2$-ary additive conjucyclic codes of length $n$.
\end{theorem}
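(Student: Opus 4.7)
The plan is to chain together three correspondences already established (or recalled) in the paper, so the argument is essentially a counting exercise once the bijections are in place. First I would invoke Theorem 3.6, which via the $\mathbb{F}_q$-linear isomorphism $\Psi$ sets up a bijective correspondence $\mathscr{C}\longleftrightarrow\mathscr{D}=\Psi(\mathscr{C})$ between additive conjucyclic codes of length $n$ over $\mathbb{F}_{q^2}$ and $q$-ary linear cyclic codes of length $2n$. Consequently, counting the former reduces to counting the latter.

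Next I would use the standard fact recalled in Section~2: via $\varphi$, $q$-ary linear cyclic codes of length $2n$ correspond bijectively to ideals of $\mathbb{R}=\mathbb{F}_q[x]/\langle x^{2n}-1\rangle$, and because $\mathbb{R}$ is a principal ideal ring, these ideals are in one-to-one correspondence with monic divisors of $x^{2n}-1$ in $\mathbb{F}_q[x]$. Hence the number of additive conjucyclic codes equals the cardinality of $\mathrm{Div}_{\mathbb{F}_q}(x^{2n}-1)$.

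Finally, I would appeal to the factorization in Equation~(3),
\[
x^{2n}-1=g_{1}(x)^{p^{\ell}}g_{2}(x)^{p^{\ell}}\cdots g_{t}(x)^{p^{\ell}},
\]
with the $g_i(x)$ pairwise distinct irreducible factors. A monic divisor is uniquely specified by choosing an exponent $s_i\in\{0,1,\ldots,p^{\ell}\}$ for each $i$, and each tuple $(s_1,\ldots,s_t)$ yields a distinct divisor because the $g_i(x)$ are pairwise coprime. Therefore there are exactly $(p^{\ell}+1)^t$ monic divisors, and the enumeration claim follows.

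There is no real obstacle here; the content was done in Theorem~3.6 (the bijection with cyclic codes) and in the Preliminaries (the correspondence between cyclic codes and divisors). The only care needed is to observe that the $p^{\ell}+1$ allowable exponents for each irreducible factor lead to genuinely distinct divisors (which is immediate by unique factorization in $\mathbb{F}_q[x]$), and that every divisor of $x^{2n}-1$ arises this way (again by unique factorization applied to Equation~(3)). So the proof is essentially a one-line chaining of bijections followed by the product rule.
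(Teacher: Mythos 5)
Your proposal is correct and follows essentially the same route as the paper: the paper likewise counts the $(p^{\ell}+1)^t$ monic divisors of $x^{2n}-1$ arising from Equation (3) and transfers that count through the divisor--cyclic-code correspondence and the bijection $\Psi$ of Theorem 3.6. No gaps; your added remarks on distinctness and exhaustiveness of the divisors are just the unique-factorization details the paper leaves implicit.
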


Let $g(x) = g_1(x)^{s_1} \cdots g_t(x)^{s_t} \in \Div_{\F_q}(x^{2n}-1)$ with $\deg( g_i(x)) = d_i$. Then the cyclic code $\langle g(x) \rangle$ has dimension $2n - \sum_{i=1}^t s_i d_i$, so $|\langle g(x) \rangle| = q^{2n - \sum s_i d_i}$, which is also the size of the corresponding additive conjucyclic code.

%Let $g(x)\neq x^{2n}-1$ and $k = \deg(g(x)) = \sum_{i=1}^t s_i d_i$. Write $g(x) = g_0 + g_1x + \cdots + g_k x^k$, where $g_k\in\F_q$ and $0\le k\le 2n-1$. In the rest of the paper, we define
%\[
%h(x) = \frac{x^{2n}-1}{g(x)} = h_0 + h_1x + \cdots + h_{2n-k}x^{2n-k},\qquad
%h^*(x) = \frac{1}{h_{2n-k}} x^{2n-k}h\!\left(\frac{1}{x}\right).
%\]

%and $W_{f(x)} = \Psi^{-1}(V_{f(x)}) = \Psi^{-1}(f_0,f_1,\dots,f_{2n-1})\in\F_{q^2}^n$.

We now turn to the description of generator matrices for additive conjucyclic codes. To this end, we first recall the notion of a generator matrix for $\F_q$-linear additive codes over $\F_{q^2}$.

\begin{definition}
	A generator matrix of an $\F_q$-linear additive code $\mathcal{C}$ over $\F_{q^2}$ is a matrix whose rows form an $\F_q$-basis of $\mathcal{C}$.
\end{definition}

For a polynomial
$
f(x) = f_0 + f_1x + \cdots + f_{2n-1}x^{2n-1} \in \F_q[x] / \langle x^{2n}-1 \rangle,
$
we denote by $\pmb{f} = (f_0, f_1, \dots, f_{2n-1}) \in \F_q^{2n}$ its coefficient vector. This notation will be used to relate cyclic codes and their associated additive conjucyclic codes via the isomorphism $\Psi_\beta$ introduced in \eqref{eq2-exp}.

\begin{theorem}\label{thm:generator}
	Let \(\mathcal{C}\) be an additive conjucyclic code over \(\F_{q^2}\) of length \(n\) and let \(\mathcal{D} = \Psi_\beta(\mathcal{C}) = \langle g(x) \rangle\) be the corresponding \(q\)-ary linear cyclic code of length \(2n\), where \(g(x) \in \Div_{\F_q}(x^{2n}-1)\) and \(\deg(g(x)) = k\). Denote by \(\pmb{g} = (g_0, g_1, \dots, g_{2n-1})\) the coefficient vector of \(g(x)\). Then a generator matrix of \(\mathcal{C}\) is given by
	\[
	\hat{G} = \begin{pmatrix}
		\Psi_\beta^{-1}(\pmb{g}) \\
		T(\Psi_\beta^{-1}(\pmb{g})) \\
		\vdots \\
		T^{2n-k-1}(\Psi_\beta^{-1}(\pmb{g}))
	\end{pmatrix},
	\]
	where \(T\) is the right conjucyclic shift operator defined in \eqref{eq222xx}.
\end{theorem}

\begin{proof}
	We first show that every codeword \(\mathbf{c} \in \mathcal{C}\) lies in the row space of \(\hat{G}\). Since \(\mathcal{D} = \langle g(x) \rangle\) is a cyclic code, its generator matrix is given by
	\[
	G_{\mathcal{D}} = \begin{pmatrix}
		\pmb{g} \\
		\sigma(\pmb{g}) \\
		\vdots \\
		\sigma^{2n-k-1}(\pmb{g})
	\end{pmatrix},
	\]
	where \(\pmb{g} = (g_0, g_1, \dots, g_{2n-1})\) is the coefficient vector of \(g(x)\). So there exist scalars \(k_0, k_1, \dots, k_{2n-k-1} \in \F_q\) such that
	\[
	\Psi_\beta(\mathbf{c}) = \sum_{i=0}^{2n-k-1} k_i \sigma^i(\pmb{g}).
	\]
	Applying the inverse isomorphism \(\Psi_\beta^{-1}\) and using Proposition~\ref{prop:commute} which gives \(\Psi_\beta^{-1}(\sigma^i(\pmb{g})) = T^i(\Psi_\beta^{-1}(\pmb{g}))\), we obtain
	\[
	\mathbf{c} = \sum_{i=0}^{2n-k-1} k_i T^i(\Psi_\beta^{-1}(\pmb{g})).
	\]
	Thus each codeword is a linear combination of the rows of \(\hat{G}\). Moreover, since \(|\mathcal{C}| = |\mathcal{D}| = q^{2n-k}\) and the rows of \(\hat{G}\) are linearly independent over \(\F_q\) (as they are images of the basis of \(\mathcal{D}\)), they form an \(\F_q\)-basis of \(\mathcal{C}\). Consequently, \(\hat{G}\) is a generator matrix of \(\mathcal{C}\).
\end{proof}

\begin{example}\label{exam111}
	Let \(q = 4\) and \(n = 11\). Over \(\F_4\), the complete factorization of \(x^{22}-1\) is
	\[
	(1+x)^2 (1+\omega^2 x + x^2 + x^3 +\omega x^4 + x^5)^2 (1+\omega x + x^2 + x^3 +\omega^2 x^4 + x^5)^2,
	\]
	where \(\omega\) is a primitive element of \(\F_4\). Hence every divisor of \(x^{22}-1\) is of the form \((1+x)^{i_1}(1+\omega^2 x + x^2 + x^3 +\omega x^4 + x^5)^{i_2}(1+\omega x + x^2 + x^3 +\omega^2 x^4 + x^5)^{i_3}\) with \(0\le i_1,i_2,i_3\le 2\). By Theorem~\ref{theo3.7}, there are \(3^3 = 27\) distinct additive conjucyclic codes over \(\F_{16}\) of length \(11\).
	
	Consider the code \(\mathcal{C}\) determined by the divisor
	\[
	g(x) = (1+\omega^2 x + x^2 + x^3 +\omega x^4 + x^5)^2 = 1+\omega x^2 + x^4 + x^6 +\omega^2 x^8 + x^{10}.
	\]
	Let \(\pmb{g} = (g_0,\dots,g_{21})\) be the coefficient vector of \(g(x)\); explicitly,
	\[
	\pmb{g} = (1,0,\omega,0,1,0,1,0,\omega^2,0,1,0,0,0,0,0,0,0,0,0,0,1)\in\F_4^{22}.
	\]
	Choose a primitive element \(\beta\) of \(\F_{16}\) and set \(\mathbf{w} = \Psi_\beta^{-1}(\pmb{g})\). Then
	\[
	\mathbf{w} = (\beta,0,\beta^{6},0,\beta,0,\beta,0,\beta^{11},0,\beta)\in\F_{16}^{11}.
	\]
	According to Theorem~\ref{thm:generator}, a generator matrix of \(\mathcal{C}\) is
	\[
	\hat{G} = \begin{pmatrix}
		\mathbf{w} \\
		T(\mathbf{w}) \\
		\vdots \\
		T^{2n-k-1}(\mathbf{w})
	\end{pmatrix},
	\]
	where \(k = \deg(g(x)) = 10\) and thus \(2n-k = 12\). Using Magma~\cite{bosma1997}, we find that \(\mathcal{C}\) is an \((11,4^{12},5)_{16}\) additive code; its weight distribution is
	\[
	1 + 825z^{5} + 1980z^{6} + 61875z^{7} + 391875z^{8} + 2025375z^{9} + 6045600z^{10} + 8249685z^{11}.
	\]
	The Singleton bound for additive codes over \(\F_{q^2}\) (see, e.g., \cite{huffman2003}) gives \(d \le n - \log_{q^2}(M) + 1 = 11 - 6 + 1 = 6\). Since the code attains \(d = 5\), it is a near MDS additive code, i.e., it misses the Singleton bound by only one.
\end{example}

\section{Alternating duality and quantum code Construction}\label{sec:generator}

For a $q^2$-ary additive code $\mathcal{C}$, both $\mathcal{C}^{\perp_e}$ and $\mathcal{C}^{\perp_s}$ are $\F_{q^2}$-linear because the Euclidean and symplectic inner products satisfy $\langle \mathbf{u}, a\mathbf{v} \rangle = a \langle \mathbf{u}, \mathbf{v} \rangle$ for any $a \in \F_{q^2}$. Thus, if $\mathbf{v}$ is orthogonal to $\mathcal{C}$, so is any scalar multiple $a\mathbf{v}$. This makes these conventional inner products ill‑suited for studying the dual structure of additive codes.

To remedy this, we introduce an alternating inner product on $\F_{q^2}^n$ that is more suitable for additive codes. We then show that the mapping $\Psi_\beta$ defined in \eqref{eq2-exp} preserves orthogonality between the symplectic inner product on $\F_q^{2n}$ and the alternating inner product on $\F_{q^2}^n$. 

%Its definition depends on a choice of primitive element $\beta$; we therefore denote it by $\langle \cdot,\cdot \rangle_{a,\beta}$.

%In the following, we define an alternating inner product on $\F_{q^2}^n$ that is suitable for additive codes. 

\begin{proposition}\label{prop:altinner}
	Let $\beta$ be a primitive element of $\F_{q^2}$. For $\pmb{u} = (u_0,\ldots,u_{n-1})$ and $\pmb{v} = (v_0,\ldots,v_{n-1})$ in $\F_{q^2}^n$, we define
	\begin{equation}\label{rrrrq123}
		\langle \pmb{u}, \pmb{v} \rangle_{a,\beta} = (\bar{\beta}^2 - \beta^2) \sum_{i=0}^{n-1} (u_i \bar{v}_i - \bar{u}_i v_i).
	\end{equation}
	Then $\langle \cdot,\cdot \rangle_{a,\beta}$ is a nondegenerate alternating inner product on $\F_{q^2}^n$. In other words, it satisfies the following properties:
	\begin{enumerate}
		\item $\langle \pmb{u}, \pmb{v} \rangle_{a,\beta} \in \F_q$ for all $\pmb{u},\pmb{v}$;
		\item $\langle \pmb{u}, \pmb{v} \rangle_{a,\beta}$ is $\F_q$-bilinear;
		\item $\langle \pmb{u}, \pmb{v} \rangle_{a,\beta} = -\langle \pmb{v}, \pmb{u} \rangle_{a,\beta}$ (alternating);
		\item $\langle \pmb{u}, \pmb{u} \rangle_{a,\beta} = 0$ for all $\pmb{u}$;
		\item If $\langle \pmb{u}, \pmb{v} \rangle_{a,\beta} = 0$ for all $\pmb{v} \in \F_{q^2}^n$, then $\pmb{u} = \mathbf{0}$ (nondegeneracy).
	\end{enumerate}
\end{proposition}

\begin{proof}
	Set $\gamma = \bar{\beta}^2 - \beta^2$. Observe that $\gamma \neq 0$ because $\beta \notin \F_q$, and $\gamma^q = \bar{\beta}^{2q} - \beta^{2q} = \beta^{2q^2} - \bar{\beta}^{2q} = \beta^{2} - \bar{\beta}^{2} = -\gamma$. If $\operatorname{char}(\F_{q^2}) = 2$, then $-\gamma = \gamma$, so $\gamma^q = \gamma$; otherwise $(\gamma^q)^2 = \gamma^{2q} = (\gamma^2)^q = \gamma^2$, which also forces $\gamma^q = \gamma$. Hence $\gamma \in \F_q \setminus \{0\}$.
	
	\begin{enumerate}
		\item For each $i$, the quantity $u_i\bar{v}_i - \bar{u}_i v_i$ is equal to its own $q$-th power, therefore lies in $\F_q$. Multiplying by $\gamma \in \F_q$ and summing over $i$ yields an element of $\F_q$.
		
		\item The map $\pmb{v} \mapsto \langle \pmb{u}, \pmb{v} \rangle_{a,\beta}$ is a linear combination of the coordinates of $\pmb{v}$ with coefficients in $\F_q$, hence $\F_q$-linear. By symmetry (up to sign), the same holds for the first argument, so the form is $\F_q$-bilinear.
		
		\item Direct computation gives
		\[
		\langle \pmb{v}, \pmb{u} \rangle_{a,\beta} = \gamma \sum_i (v_i\bar{u}_i - \bar{v}_i u_i) = -\gamma \sum_i (u_i\bar{v}_i - \bar{u}_i v_i) = -\langle \pmb{u}, \pmb{v} \rangle_{a,\beta}.
		\]
		
		\item Taking $\pmb{v} = \pmb{u}$ in the previous property yields $\langle \pmb{u}, \pmb{u} \rangle_{a,\beta} = 0$.
		
		\item Assume $\langle \pmb{u}, \pmb{v} \rangle_{a,\beta} = 0$ for all $\pmb{v} \in \F_{q^2}^n$. For a fixed coordinate $j$, choose $\pmb{v}$ with $v_j = 1$ and all other entries $0$. Then
		\[
		\gamma (u_j \bar{1} - \bar{u}_j \cdot 1) = \gamma (u_j - \bar{u}_j) = 0,
		\]
		so $u_j = \bar{u}_j$, i.e., $u_j \in \F_q$. Next, choose $\pmb{v}$ with $v_j = \beta$ and $0$ elsewhere. We obtain
		\[
		\gamma (u_j \bar{\beta} - \bar{u}_j \beta) = \gamma (u_j \beta^q - u_j \beta) = \gamma u_j (\beta^q - \beta) = 0.
		\]
		Since $\gamma \neq 0$ and $\beta^q - \beta \neq 0$ (as $\beta \notin \F_q$), it follows that $u_j = 0$. This holds for every $j$, hence $\pmb{u} = \mathbf{0}$.
	\end{enumerate}
\end{proof}

\begin{remark}
	In the special case $q = 2$, we have $\beta = 1$ and the alternating inner product $\langle \cdot,\cdot \rangle_{a,\beta}$ defined in \eqref{rrrrq123} reduces to
	\[
	\langle \pmb{u}, \pmb{v} \rangle_{a,\beta} = \sum_{i=1}^{n} (u_i \bar{v}_i + \bar{u}_i v_i) = \Tr\!\left( \sum_{i=1}^{n} u_i \bar{v}_i \right).
	\]
	Hence, the alternating inner product can be viewed as a generalization of the classical trace-Hermitian inner product introduced in \cite{calderbank1998}.
\end{remark}

%In the following, we show that the mapping $\Psi$ defined in \eqref{eq2-exp} preserves orthogonality between  the symplectic inner product on $\F_q^{2n}$ and the alternating inner product on $\F_{q^2}^n$.

\begin{proposition}\label{prop:orth}
	For any $\mathbf{u}, \mathbf{v} \in \F_{q^2}^n$, we have
	\[
	\langle \Psi_\beta(\mathbf{u}), \Psi_\beta(\mathbf{v}) \rangle_s = \langle \mathbf{u}, \mathbf{v} \rangle_{a,\beta},
	\]
	where $\langle \cdot,\cdot \rangle_s$ and $\langle \cdot,\cdot \rangle_{a,\beta}$ are the symplectic inner product on $\F_q^{2n}$ (see \eqref{wewe12}) and the alternating inner product on $\F_{q^2}^n$ (see \eqref{rrrrq123}), respectively.
\end{proposition}
\begin{proof}
	Write $\mathbf{u} = (u_0,\ldots,u_{n-1})$ and $\mathbf{v} = (v_0,\ldots,v_{n-1})$. Then
	\[
	\Psi_\beta(\mathbf{u}) = \bigl(\Tr(\beta u_0),\ldots,\Tr(\beta u_{n-1}),\Tr(\bar{\beta}u_0),\ldots,\Tr(\bar{\beta}u_{n-1})\bigr),
	\]
	and similarly for $\Psi_\beta(\mathbf{v})$. Using the definition of the symplectic inner product,
	\[
	\begin{aligned}
		\langle\Psi_\beta(\mathbf{u}),\Psi_\beta(\mathbf{v})\rangle_s
		&= \sum_{i=0}^{n-1}\Bigl(\Tr(\bar{\beta}u_i)\Tr(\beta v_i) - \Tr(\beta u_i)\Tr(\bar{\beta}v_i)\Bigr)\\
		&= \sum_{i=0}^{n-1}\Bigl((\bar{\beta}u_i+\beta\bar{u}_i)(\beta v_i+\bar{\beta}\bar{v}_i) - (\beta u_i+\bar{\beta}\bar{u}_i)(\bar{\beta}v_i+\beta\bar{v}_i)\Bigr)\\
		&= \sum_{i=0}^{n-1}\bigl(\bar{\beta}^2 u_i\bar{v}_i + \beta^2\bar{u}_i v_i - \beta^2 u_i\bar{v}_i - \bar{\beta}^2\bar{u}_i v_i\bigr)\\
		&= (\bar{\beta}^2-\beta^2)\sum_{i=0}^{n-1}(u_i\bar{v}_i-\bar{u}_i v_i) = \langle\mathbf{u},\mathbf{v}\rangle_{a,\beta},
	\end{aligned}
	\]
which completes the proof. 
\end{proof}

\begin{definition}\label{def:altdual}
	The alternating dual code of $\mathcal{C}$ is defined as
	\[
	\mathcal{C}^{\perp_a} = \{ \pmb{v} \in \F_{q^2}^n \mid \langle \pmb{u}, \pmb{v} \rangle_{a,\beta} = 0 \text{ for all } \pmb{u} \in \mathcal{C} \}.
	\]
\end{definition}

\begin{theorem}\label{sd22323}
	Let $\mathcal{C}$ be an additive conjucyclic code over $\F_{q^2}$ of length $n$ and $\mathcal{D} = \Psi(\mathcal{C})$ be a $q$-ary linear cyclic code of length $2n$. Then we have
	\[
	\mathcal{C}^{\perp_a} = \Psi^{-1}(\mathcal{D}^{\perp_s}).
	\]
\end{theorem}
\begin{proof}
	For any $\mathbf{c}\in\mathcal{C}$ and $\mathbf{d}\in\mathcal{C}^{\perp_a}$, by Proposition \ref{prop:orth}, we have $\langle\mathbf{c},\mathbf{d}\rangle_a = \langle\Psi(\mathbf{c}),\Psi(\mathbf{d})\rangle_s = 0$. Therefore, $\Psi(\mathbf{d})\in\mathcal{D}^{\perp_s}$, i.e., $\Psi(\mathcal{C}^{\perp_a})\subseteq\mathcal{D}^{\perp_s}$ and $\mathcal{C}^{\perp_a}\subseteq\Psi^{-1}(\mathcal{D}^{\perp_s})$. Similarly, we can also prove $\Psi^{-1}(\mathcal{D}^{\perp_s})\subseteq\mathcal{C}^{\perp_a}$. This concludes the result.
\end{proof}

%\section{The generator and parity-check matrices for additive conjucyclic codes}\label{sec:generator}
%In order to provide parity-check matrices of $q^2$-ary additive conjucyclic codes, 
Let $g(x) \in \Div_{\F_q}(x^{2n}-1)$ with $\deg(g(x)) = k$, and set $h(x) = \frac{x^{2n}-1}{g(x)}$. Define the reciprocal polynomial
\[
h^*(x) = \frac{1}{h_{2n-k}} x^{2n-k} h\!\left(\frac{1}{x}\right).
\]
 Denote by $\pmb{h^*} = (h_0^*, h_1^*, \dots, h_{2n-1}^*) \in \F_q^{2n}$ the coefficient vector of $h^*(x)$.

\begin{lemma}\label{lem:dualgen}
	Let $\mathcal{D} = \langle g(x) \rangle$ be a $q$-ary linear cyclic code of length $2n$. Then a generator matrix of the symplectic dual code $\mathcal{D}^{\perp_s}$ is given by
	\[
	H_s = \begin{pmatrix}
		\tau(\pmb{h^*}) \\
		\tau(\sigma(\pmb{h^*})) \\
		\vdots \\
		\tau(\sigma^{k-1}(\pmb{h^*}))
	\end{pmatrix},
	\]
	where $\sigma$ is the cyclic shift operator defined in \eqref{eq111xx} and $\tau$ is the linear transformation defined by
	\[
	\tau(\mathbf{v}) = (v_0, v_1, \dots, v_{2n-1}) \begin{pmatrix} 0 & I_n \\ -I_n & 0 \end{pmatrix} = (-v_n, \dots, -v_{2n-1}, v_0, \dots, v_{n-1})
	\]
	for any $\mathbf{v} = (v_0, v_1, \dots, v_{2n-1}) \in \F_q^{2n}$, with $I_n$ the $n \times n$ identity matrix.
\end{lemma}
\begin{proof}
	By the theory of cyclic codes \cite{huffman2003}, the Euclidean dual code $\mathcal{D}^{\perp_e}$ is generated by the matrix
	\[
	H_e = \begin{pmatrix}
		\pmb{h^*} \\
		\sigma(\pmb{h^*}) \\
		\vdots \\
		\sigma^{k-1}(\pmb{h^*})
	\end{pmatrix}.
	\]
	For any codeword $\mathbf{d} \in \mathcal{D}$ and any $i \in \{0, \dots, k-1\}$, we have
	\[
	\langle \mathbf{d}, \tau(\sigma^i(\pmb{h^*})) \rangle_s = \langle \mathbf{d}, \sigma^i(\pmb{h^*}) \rangle_e = 0.
	\]
	Hence, the rows of $H_s$ belong to $\mathcal{D}^{\perp_s}$. Moreover, since $H_s = H_e \begin{pmatrix} 0 & I_n \\ -I_n & 0 \end{pmatrix}$, we have that $\operatorname{rank}(H_s) = \operatorname{rank}(H_e) = k$. Therefore, $H_s$ is a generator matrix of $\mathcal{D}^{\perp_s}$.
\end{proof}

\begin{theorem}\label{thm:parity}
	Let $\mathcal{C}$ be a $q^2$-ary additive conjucyclic code of length $n$ corresponding to the $q$-ary linear cyclic code $\mathcal{D} = \langle g(x) \rangle$, where $g(x) \in \Div_{\F_q}(x^{2n}-1)$ and $\deg(g(x)) = k$. Then a generator matrix of the alternating dual code $\mathcal{C}^{\perp_{a,\beta}}$ is given by
	\begin{equation}\label{rere11}
		\hat{H}_a = \begin{pmatrix}
		\Psi_\beta^{-1}(\tau(\pmb{h^*})) \\
		\Psi_\beta^{-1}(\tau(\sigma(\pmb{h^*}))) \\
		\vdots \\
		\Psi_\beta^{-1}(\tau(\sigma^{k-1}(\pmb{h^*})))
	\end{pmatrix},
	\end{equation}
	where $\pmb{h^*} \in \F_q^{2n}$ is the coefficient vector of $h^*(x)$, $\sigma$ is the cyclic shift operator, and $T$ is the right conjucyclic shift operator defined in \eqref{eq222xx}.
\end{theorem}
\begin{proof}
	By Theorem~\ref{sd22323}, we have $\mathcal{C}^{\perp_{a,\beta}} = \Psi_\beta^{-1}(\mathcal{D}^{\perp_s})$. 
	Lemma~\ref{lem:dualgen} provides a generator matrix $H_s$ of $\mathcal{D}^{\perp_s}$ whose rows are $\tau(\sigma^i(\pmb{h^*}))$ for $i=0,\dots,k-1$. 
	Since $\Psi_\beta^{-1}$ is an $\F_q$-linear isomorphism, applying it to each row of $H_s$ yields a generator matrix of $\Psi_\beta^{-1}(\mathcal{D}^{\perp_s}) = \mathcal{C}^{\perp_{a,\beta}}$, namely the matrix $\hat{H}_a$ as defined above. Hence $\hat{H}_a$ is a generator matrix of $\mathcal{C}^{\perp_{a,\beta}}$.
\end{proof}

\begin{remark}
	The generator matrix $\hat{H}_a$ of $\mathcal{C}^{\perp_{a,\beta}}$ can also be regarded as a parity-check matrix for $\mathcal{C}$. Indeed, $\mathcal{C}$ is completely characterized by $\hat{H}_a$ via the condition
	\[
	\mathcal{C} = \left\{ \mathbf{c} \in \F_{q^2}^n \mid \hat{H}_a \cdot \mathbf{c}^\top = \mathbf{0} \right\},
	\]
	where the product is taken with respect to the alternating inner product, i.e., each row of $\hat{H}_a$ is orthogonal to every codeword under $\langle \cdot, \cdot \rangle_{a,\beta}$.
\end{remark}

\begin{corollary}\label{rere113}
	If $\charac(\F_{q^2}) = 2$, then the generator matrix $\hat{H}_a$ of $\mathcal{C}^{\perp_{a,\beta}}$ in \eqref{rere11} can be simplified as
	\[
	\hat{H}_a = \begin{pmatrix}
		\Psi_\beta^{-1}(\tau(\pmb{h^*})) \\
		T(\Psi_\beta^{-1}(\tau(\pmb{h^*}))) \\
		\vdots \\
		T^{k-1}(\Psi_\beta^{-1}(\tau(\pmb{h^*})))
	\end{pmatrix}.
	\]
\end{corollary}
\begin{proof}
	By Theorem~\ref{thm:parity}, the rows of $\hat{H}_a$ are $\Psi_\beta^{-1}(\tau(\sigma^i(\pmb{h^*})))$ for $i = 0,\dots,k-1$. When $\charac(\F_{q^2}) = 2$, a direct computation shows that $\tau$ commutes with the cyclic shift $\sigma$; i.e., $\tau(\sigma(\mathbf{v})) = \sigma(\tau(\mathbf{v}))$ for any $\mathbf{v}\in\F_q^{2n}$. Hence $\tau(\sigma^i(\pmb{h^*})) = \sigma^i(\tau(\pmb{h^*}))$ for all $i$. Applying Proposition~\ref{prop:commute}, we have $\Psi_\beta^{-1}(\sigma^i(\tau(\pmb{h^*}))) = T^i(\Psi_\beta^{-1}(\tau(\pmb{h^*})))$. Therefore
	\[
	\Psi_\beta^{-1}(\tau(\sigma^i(\pmb{h^*}))) = T^i(\Psi_\beta^{-1}(\tau(\pmb{h^*}))),
	\]
	and the claimed simplified form follows.
\end{proof}

\begin{corollary}\label{cor:char2}
	Let $\mathcal{C}$ be a $q^2$-ary additive conjucyclic code. If $\charac(\F_{q^2}) = 2$, then the alternating dual code $\mathcal{C}^{\perp_{a,\beta}}$ is also additive conjucyclic.
\end{corollary}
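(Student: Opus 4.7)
The plan is to read the conclusion off directly from Theorem 4.6. That theorem, under the assumption ${\rm char}(\mathbb{F}_{q^2})=2$, exhibits an additive generator matrix of $\mathscr{C}^{\perp_a}$ whose rows are precisely the iterated conjucyclic shifts
$$W_{h^{\epsilon}(x)},\ T(W_{h^{\epsilon}(x)}),\ \ldots,\ T^{k-1}(W_{h^{\epsilon}(x)}),$$
i.e.\ a single $T$-orbit truncated to $k$ terms. Closure of $\mathscr{C}^{\perp_a}$ under $T$ will then be almost automatic.

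First I would note that applying $T$ to the $i$-th row produces the $(i+1)$-th row for $0\le i\le k-2$, so these applications obviously stay inside $\mathscr{C}^{\perp_a}$. Second, applying $T$ to the last row yields $T^{k}(W_{h^{\epsilon}(x)})$, and I must verify this vector still lies in $\mathscr{C}^{\perp_a}$. This is where the dimension count does the work: by Theorem 3.9 and the fact that $|\mathscr{D}^{\perp_s}|=q^{k}$, the code $\mathscr{C}^{\perp_a}$ has exactly $q^{k}$ elements, and the $k$ rows of $\widehat{H}_a$ are $\mathbb{F}_q$-linearly independent; hence any further vector in $\mathscr{C}^{\perp_a}$, in particular $T^{k}(W_{h^{\epsilon}(x)})$, must be an $\mathbb{F}_q$-linear combination of those $k$ generators. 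To see that $T^{k}(W_{h^{\epsilon}(x)})$ does belong to $\mathscr{C}^{\perp_a}$, I would transport it through $\Psi$: by Proposition 3.5, $\Psi(T^{k}(W_{h^{\epsilon}(x)}))=\sigma^{k}(V_{h^{\epsilon}(x)})\in\mathscr{D}^{\perp_s}$ because $\mathscr{D}^{\perp_s}$ is the Euclidean-type dual (after the $\tau$ twist) of a cyclic code, and in characteristic $2$ the identity $\tau\sigma=\sigma\tau$ established in the proof of Theorem 4.6 makes $\mathscr{D}^{\perp_s}$ itself cyclic.

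With $T(\widehat{H}_a$'s rows$)\subseteq\mathscr{C}^{\perp_a}$ established, $\mathbb{F}_q$-linearity of the code yields $T(\mathscr{C}^{\perp_a})\subseteq\mathscr{C}^{\perp_a}$; since $T$ is a bijection of finite order on $\mathbb{F}_{q^2}^{n}$, this containment is actually an equality, and $\mathscr{C}^{\perp_a}$ is additive conjucyclic.

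There is really no serious obstacle here; the only thing to be careful about is that the argument genuinely uses ${\rm char}(\mathbb{F}_{q^2})=2$ in exactly one place, namely the commutation $\tau\sigma=\sigma\tau$ that underlies the clean $T$-orbit form of $\widehat{H}_a$ in Theorem 4.6. Without that hypothesis the sign in the definition of $\tau$ prevents it from commuting with $\sigma$, which is why in odd characteristic the argument above would break down and the corollary need not hold.
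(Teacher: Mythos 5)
Your proposal is correct and follows essentially the same route as the paper, which simply observes that the generator matrix $\widehat{H}_a$ of $\mathscr{C}^{\perp_a}$ in Theorem 4.6 is a truncated $T$-orbit and declares the conclusion obvious. You merely fill in the detail the paper leaves implicit — that $T^{k}(W_{h^{\epsilon}(x)})$ stays in $\mathscr{C}^{\perp_a}$ because $\tau\sigma=\sigma\tau$ in characteristic $2$ makes $\mathscr{D}^{\perp_s}$ cyclic — and your identification of where the characteristic-$2$ hypothesis enters is accurate.
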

\begin{proof}
 When $\charac(\F_{q^2}) = 2$, Corollary~\ref{rere113} shows that the alternating dual code $\mathcal{C}^{\perp_{a,\beta}}$ admits a generating set consisting of the vectors $T^i(\mathbf{w})$ for $i = 0,\dots,k-1$, where $\mathbf{w} = \Psi_\beta^{-1}(\tau(\pmb{h^*}))$. Since $T$ is the right conjucyclic shift operator, the $\F_q$-linear span of these vectors is closed under $T$. Consequently, $\mathcal{C}^{\perp_{a,\beta}}$ is an additive conjucyclic code.
\end{proof}

\begin{remark}
	In \cite{abualrub2020}, it was shown that for a quaternary additive conjucyclic code $\mathcal{C}$, its trace dual code
	\[
	\mathcal{C}^{\mathrm{Tr}} = \{\mathbf{v} \in \F_4^n \mid \Tr(\langle \mathbf{u}, \mathbf{v} \rangle_e) = 0 \text{ for all } \mathbf{u} \in \mathcal{C}\}
	\]
	is also additive conjucyclic. This result can now be seen as a special case of Corollary~\ref{cor:char2}. Indeed, when $q = 2$, the alternating inner product reduces to the trace‑Hermitian form, and one verifies that $\mathcal{C}^{\mathrm{Tr}} = (\mathcal{C}^{\perp_{a,\beta}})^2$, where $(\cdot)^2$ denotes componentwise squaring. Since Corollary~\ref{cor:char2} guarantees that $\mathcal{C}^{\perp_{a,\beta}}$ is additive conjucyclic, the same property follows immediately for $\mathcal{C}^{\mathrm{Tr}}$. Thus the present work not only generalizes the known result to arbitrary $q$ but also places it within a unified algebraic framework.
\end{remark}

The connection between classical codes and QECCs is well established via the stabilizer formalism, where the symplectic inner product serves as the key link \cite{ashikhmin2001, ketkar2006}.

\begin{lemma}\cite{ashikhmin2001}\label{ss113r}
	If $\mathcal{C}$ is a symplectic dual-containing linear code with parameters $[2n,k]_q$, then there exists an $[[n,k-n,\ge w_s(\mathcal{C})]]_q$ QECC that is pure to $w_s(\mathcal{C})$.
\end{lemma}

As a direct consequence of the established correspondence and the properties of the alternating inner product, we obtain the following quantum code construction.
\begin{theorem}\label{thm:qeck}
	Let $\mathcal{C}$ be an $(n,M=q^k,w_h(\mathcal{C}))_{q^2}$ additive conjucyclic code satisfying $\mathcal{C}^{\perp_{a,\beta}} \subseteq \mathcal{C}$. Then there exists a pure $[[n,k-n,\ge w_h(\mathcal{C})]]_q$ QECC.
\end{theorem}
\begin{proof}
	By Theorem~\ref{thm:correspondence}, the code $\mathcal{C}$ corresponds to a $q$-ary linear cyclic code $\mathcal{D} = \Psi_\beta(\mathcal{C})$ of length $2n$ with $|\mathcal{D}| = q^k$. Moreover, Theorem~\ref{thm:correspondence} also gives $w_h(\mathcal{C}) = w_s(\mathcal{D})$. Since $\mathcal{C}^{\perp_{a,\beta}} \subseteq \mathcal{C}$, Proposition~\ref{prop:orth} implies $\mathcal{D}^{\perp_s} \subseteq \mathcal{D}$. Thus $\mathcal{D}$ is a symplectic dual-containing linear code with parameters $[2n,k]_q$. Applying Lemma~\ref{ss113r} yields a pure $[[n,k-n,\ge w_s(\mathcal{D})]]_q = [[n,k-n,\ge w_h(\mathcal{C})]]_q$ QECC, as desired.
\end{proof}

\begin{example}\label{ex:dual}
	We continue with the same parameters as in Example~\ref{exam111}: let \(q = 4\), \(n = 11\), and let \(\mathcal{C}\) be the additive conjucyclic code over \(\F_{16}\) determined by the divisor
	\[
	g(x) = (1+\omega^2 x + x^2 + x^3 +\omega x^4 + x^5)^2 = 1+\omega x^2 + x^4 + x^6 +\omega^2 x^8 + x^{10},
	\]
	where \(\omega\) is a primitive element of \(\F_4\). With the same primitive element \(\beta\) of \(\F_{16}\) as before, we now examine the dual side.
	
	Set \(h(x) = \frac{x^{22}-1}{g(x)} = 1 + \omega x^{2} + \omega x^{4} + \omega^{2}x^{8} + \omega^{2}x^{10} + x^{12}\). Then
	\[
	h^{*}(x) = 1 + \omega^{2}x^{2} + \omega^{2}x^{4} + \omega x^{8} + \omega x^{10} + x^{12},
	\]
	and its coefficient vector is
	\[
	\pmb{h^*} = (1,0,\omega^{2},0,\omega^{2},0,0,0,\omega,0,\omega,1,0,0,0,0,0,0,0,0,0,0)\in\F_4^{22}.
	\]
	Applying the linear transformation \(\tau\) from Lemma~\ref{lem:dualgen} yields
	\[
	\tau(\pmb{h^*}) = (0,1,0,0,0,0,0,0,0,0,1,0,\omega^{2},0,\omega^{2},0,0,0,\omega,0,\omega,0)\in\F_4^{22}.
	\]
	Set \(\mathbf{u} = \Psi_\beta^{-1}(\tau(\pmb{h^*}))\); explicitly,
	\[
	\mathbf{u} = (\beta^{4},\beta,\beta^{14},0,\beta^{14},0,0,0,\beta^{9},0,\beta^{9})\in\F_{16}^{11}.
	\]
	Since \(\charac(\F_{16})=2\), 
	by Corollary~\ref{cor:char2}, a generator matrix of the alternating dual code \(\mathcal{C}^{\perp_{a,\beta}}\) is
	\[
	\hat{H}_a = \begin{pmatrix}
		\mathbf{u} \\
		T(\mathbf{u}) \\
		\vdots \\
		T^{9}(\mathbf{u})
	\end{pmatrix}.
	\]

	One verifies directly that \(\mathcal{C}^{\perp_{a,\beta}} \subseteq \mathcal{C}\); thus \(\mathcal{C}\) is alternating dual‑containing. Applying Theorem~\ref{thm:qeck} we obtain a pure \([[11,1,5]]_4\) QECC. Its weight enumerator is
	\[
	825z^{5} + 1155z^{6} + 61050z^{7} + 361350z^{8} + 91904925z^{9} + 5664615z^{10} + 7734720z^{11}.
	\]
This code not only exceeds the quantum Gilbert–Varshamov bound \cite{feng2004} and improves upon the previously known \([[19,1,5]]_4\) code listed in \cite{edel}, but is in fact optimal according to Grassl's code tables \cite{Grassl}. Moreover, unlike the optimal stabilizer codes recorded in the tables—whose stabilizer matrices typically lack algebraic structure and thus require large storage—our construction yields a generator matrix with a cyclic structure, which significantly reduces the memory requirement.
\end{example}

\section{Conclusions}\label{sec:conclusion}
In this paper, we have developed the algebraic theory of additive conjucyclic codes over $\F_{q^2}$. We first established a one-to-one correspondence between $q^2$-ary additive conjucyclic codes and $q$-ary linear cyclic codes via the trace map, which allows us to determine the enumeration of such codes and provide explicit forms of their generator matrices. We then introduced an alternating inner product on $\F_{q^2}^n$ and, under this inner product, derived a necessary and sufficient condition for these codes to be dual-containing, from which we obtained explicit parity-check matrices and a construction method for $q$-ary QECCs. This work provides a systematic treatment of additive conjucyclic codes over general $\F_{q^2}$ and their application to non-binary quantum error correction. We hope that the framework established herein will inspire further research on constructing QECCs with good parameters from additive conjucyclic codes.

\section*{Author contributions}
J. Lv: Writing – original draft, formal analysis, validation; X. Lian: Conceptualization, methodology, investigation; R. Li: Conceptualization, funding acquisition; H. Hou: Data curation, formal analysis, funding acquisition, supervision. All authors have read and approved the final manuscript.

%%%%%%%%%%%%%%%%%%%%%%%%%%%%%%
% Disclosure Instructions for the Use of Generative-AI Tools
%%%%%%%%%%%%%%%%%%%%%%%%%%%%%%
% We follow COPE's guidelines and policies regarding the use of Artificial Intelligence (AI) tools. COPE Policy on AI tools can be found at \url{https://publicationethics.org/cope-position-statements/ai-author}

%The use of artificial intelligence (AI) tools such as ChatGPT or Large Language Models in research publications is expanding rapidly. COPE joins organizations, such as WAME and the JAMA Network among others, to state that AI tools cannot be listed as an author of a paper. – COPE

%AI tools cannot meet the requirements for authorship as they cannot take responsibility for the submitted work. As non-legal entities, they cannot assert the presence or absence of conflicts of interest nor manage copyright and license agreements. - COPE

%Please disclose the use of any generative-AI tools in the writing of a manuscript, the production of images or graphical elements, or the collection and analysis of data. In the “Use of AI tools declaration” we ask that you disclose with tool was used as well as a description of how the tool was used. Authors are fully responsible for the content of their manuscript, including any portion produced by an AI tool, and are thus liable for any breach of publication ethics.

%If there is nothing to disclose, there is no need to add a declaration (Remember, there is no need to disclose the use of Assistive-AI). If there is generative-AI use to disclose, here is a guide for an acceptable disclosure.

\section*{Use of Generative-AI tools declaration}
The authors declare they have not used Artificial Intelligence (AI) tools in the creation of this article.

\section*{Acknowledgments}
This research is supported in part by National Key Research and Development Program of China under Grant No.
2025YFA1017200, the National Natural Science Foundation of China under Grant No. 62401144, the Natural Science Foundation
of Guangdong Province under Grant No. 2026A1515012796.

\section*{Conflict of interest}
The authors declare no conflict of interest.

For more questions regarding reference style, please refer to the \href{http://www.ncbi.nlm.nih.gov/books/NBK7256/}{Citing Medicine}.

\end{document}